\newcommand{\R}{\mathbb{R}}
\newcommand{\N}{{\mathbb{N}}}
\newcommand{\eg}{{\it e.g. }}
\newcommand{\bigdot}{\bullet}
\newcommand{\maxmdj}{\displaystyle\max_{d\in D}{m_j\cdot d}}
\newtheorem{theorem}{Theorem}[section]
\newtheorem{lemma}[theorem]{Lemma}
\newtheorem{proposition}[theorem]{Proposition}
\theoremstyle{definition}
\newtheorem{definition}[theorem]{Definition}
\newtheorem{algorithm}[theorem]{Algorithm}
\newtheorem{problem}[theorem]{Problem}
\theoremstyle{remark}
\numberwithin{equation}{section}
\begin{document}

\title[Synthesis for Constrained Nonlinear Systems]
      {Synthesis for Constrained Nonlinear Systems\\ using Hybridization and Robust Controllers on Simplices}
\thanks{This work was supported by the Agence Nationale de la Recherche (VEDECY project - ANR 2009 SEGI 015 01).}


\author[Antoine Girard]{Antoine Girard}
\address{Laboratoire Jean Kuntzmann \\
Universit\'e de Grenoble \\
B.P. 53, 38041 Grenoble, France} \email{Antoine.Girard@imag.fr}

\author[Samuel Martin]{Samuel Martin}
\address{Laboratoire Jean Kuntzmann \\
Universit\'e de Grenoble \\
B.P. 53, 38041 Grenoble, France}
\email{Samuel.Martin@imag.fr}


\maketitle

\begin{abstract}
In this paper, we propose an approach to controller synthesis for a class of constrained nonlinear systems. 
It is based on the use of a hybridization, that is a hybrid abstraction of the nonlinear dynamics.
This abstraction is defined on a triangulation of the state-space where on each simplex of the triangulation, 
the nonlinear dynamics is conservatively approximated by an affine system subject to disturbances.
Except for the disturbances, this hybridization can be seen as a piecewise affine hybrid system on simplices for which appealing control synthesis
techniques have been developed in the past decade. 
We extend these techniques to handle systems subject to disturbances by synthesizing and coordinating local robust affine controllers defined on the simplices of the triangulation. 
We show that the resulting hybrid controller can be used to control successfully the original constrained nonlinear system.
Our approach, though conservative, can be fully automated and is computationally tractable.
To show its effectiveness in practical applications, we 
apply our method to control a pendulum mounted on a cart.
\end{abstract}

\section{Introduction}

Abstraction based approaches to control synthesis have become quite popular (see~\cite{tabuada2009} and the references therein).
These consist in computing a simple conservative approximation (i.e. the abstraction) of the system dynamics.
A controller synthesized using the abstraction can then be used to control the original system.
Most of these approaches use discrete abstractions since these allow to leverage controller synthesis techniques developed in the area of discrete event systems. The use of hybrid abstractions, called hybridizations, though common for verification purpose~\cite{Henzinger1998,Asarin2007,Ramdani2008,Dang2010}, has been much less explored in the context of controller synthesis. 

In this paper, we present an approach to control synthesis for a class of constrained nonlinear systems based on the use of a hybridization. 
The abstraction is defined on a partition of the state-space in simplices (i.e. a triangulation). In each of these simplices, the nonlinear dynamics is approximated by an affine system with disturbances 
where the disturbances account for the approximation errors. The resulting abstraction can be seen as a piecewise affine hybrid system on simplices subject to disturbances.
For such systems, without disturbances, automated control synthesis techniques have been developed by Habets and van Schuppen in a series of papers~\cite{Habets2001,Habets2004,Habets2006} (see also the related work in~\cite{Belta2005,Kloetzer2006,Roszak2006}). 
We extend these techniques to handle the systems subject to disturbances by synthesizing local robust affine controllers defined on the simplices of the triangulation. 
We show that by coordinating these local controllers it is then possible to control successfully the original constrained nonlinear system. A preliminary version of these results was presented in~\cite{Girard2008}. 
The technical content in the present version has been improved: 
the class of considered abstractions is more general (the disturbance is constrained in an arbitrary compact convex set); 
we synthesize hybrid controllers instead of piecewise smooth controllers, this allows us to prove the correctness of the synthesized controller more rigorously;
finally, a more challenging application example is considered.

The paper is organized as follows. In the next section, we formulate the control problem under consideration. In section 3, we introduce the notion of hybridizations and show how these can be computed for nonlinear systems. In section 4, we extend the techniques of~\cite{Habets2006,Roszak2006} for computing local
robust affine controllers on simplices. In section 5, we show how to use these local controllers and the hybridization in order to synthesize a controller for the original system. Finally, in section 6, we show the effectiveness of our approach by applying it to control a pendulum mounted on a cart.

\section{Problem Formulation}

In this paper, we consider a constrained nonlinear system of the form:
$$
\Sigma:
\left\{
\begin{array}{l}
\dot {\bf x}(t) = f( {\bf x}(t)) + g({\bf x}(t)) {\bf u}(t), \\ {\bf x}(t)\in X,\; {\bf u}(t)\in U
\end{array}\right.
$$
where the { state domain} $X \subseteq \R^n$ is the union of a finite number of compact polytopes
and the { control domain} $U\subseteq \R^m$ is a compact polytope.
We shall assume that the maps $f: X \rightarrow \R^n$ and $g: X \rightarrow \R^{n\times m}$ are
of class $\mathcal C^2$ and $\mathcal C^1$ respectively. 
Let $X_I\subseteq X$ and $X_T\subseteq X$ be compact polytopes, specifying a set of initial states and a set of target states, respectively.

\begin{definition}
A {\it hybrid controller} for system $\Sigma$ is a tuple $\mathcal C=(Q,E,\mathcal D,\mathcal G, \mathcal H)$, 
where $Q$ is a finite set of modes; 
$E\subseteq Q\times Q$ is the set of edges; 
$\mathcal D=\{D_q\subseteq X|\; q\in Q\}$ is the set of domains with $D_q$ a compact set for all $q\in Q$;
$\mathcal G=\{G_e \subseteq D_q \cap D_{q'} |\; e=(q,q') \in E \}$ is the set of guards
with $G_e$ a compact set for all $e\in E$;
$\mathcal H=\{h_q : D_q \rightarrow U|\; q\in Q \}$ is the set of continuous controllers with $h_q$ a Lipschitz continuous map for all $q\in Q$.
\end{definition}

The system $\Sigma$ controlled by $\mathcal C$ is a hybrid system, denoted $\Sigma || \mathcal C$ whose evolution is described by executions~\cite{Lygeros2003}. An \textit{execution} 
of $\Sigma || \mathcal C$ is a (finite or infinite) sequence 
of triples $(\tau_i,q_i,\mathbf x_i)_{i=0}^N$ consisting of time intervals $\tau_i=[t_i,t_{i+1}]$ (if $N$ is finite we can have
$\tau_N=[t_N,+\infty)$), modes $q_i\in Q$, and differentiable maps $\mathbf x_i:\tau_i\rightarrow X$ such that
$t_0=0$, ${\mathbf x}_0(t_0)\in X_I\cap D_{q_0}$; for all $i=0,\dots,N$ with $i< +\infty$, for all $t\in [t_i,t_{i+1})$
$$
{\mathbf x}_i(t) \in D_{q_i} \text{ and }
\dot {\mathbf x}_i(t) = f({\mathbf x}_i(t))+g({\mathbf x}_i(t)) h_{q_i}({\mathbf x}_i(t)); 
$$
and for all $i<N$,
$$
(q_i,q_{i+1})\in E, \text{ and } {\mathbf x}_i(t_{i+1})={\mathbf x}_{i+1}(t_{i+1}) \in G_{(q_i,q_{i+1})}. 
$$
The execution is said {\it infinite} if $N$ is infinite or $\tau_N=[t_N,+\infty)$, otherwise it is said {\it finite};
it is said {\it Zeno} if $N$ is infinite and 
$\lim_{i\rightarrow +\infty} t_i$ is finite. 
$\Sigma || \mathcal C$ is said to be {\it non-blocking} if all finite executions can be extended into infinite ones; it is said {\it non-Zeno} if it does not have Zeno executions.
We consider the following synthesis problem:
\begin{problem}
\label{prob:motion}
Synthesize a hybrid controller $\mathcal C$, such that $X_I \subseteq \bigcup_{q\in Q} D_q$ and the hybrid system $\Sigma || \mathcal C$ is non-blocking,
non-Zeno and all its infinite executions $(\tau_i,q_i,\mathbf x_i)_{i=0}^N$ satisfy
$$
\exists i^*\in \N,\; 0 \le i^* \le N,\; \forall i= i^*,\dots, N,\; \forall t \in \tau_i,\;  {\mathbf x}_i(t) \in X_T.
$$
\end{problem}

Let us remark that since for all $q\in Q$, $D_q\subseteq X$, 
all infinite executions of $\Sigma || \mathcal C$ also satisfy
for all $i=0,\dots,N$, for all $t \in \tau_i$, ${\mathbf x}_i(t) \in X$.
Our approach to Problem~\ref{prob:motion} consists of two main ingredients: a hybridization of the nonlinear system $\Sigma$ and 
robust continuous controllers defined on simplices. These are presented in the following sections.

\subsubsection*{Notations} In the following, for two vectors $v_1$, $v_2$, $v_1\cdot v_2$ denotes their scalar product.
For a vector $v$, $\|v\|=\sqrt{v\cdot v}$ is its Euclidean norm. For a matrix $A$, $\|A\|$ denotes its norm induced by the Euclidean norm for vectors, and $\rho(A)$ denotes
its spectral radius.

\section{Hybridization}
\label{sec:hyb}
A { hybridization} is a hybrid abstraction of a continuous dynamical system~\cite{Asarin2007}.
Conservativeness of the approximation is ensured by the introduction of disturbances.
In this paper, we shall use hybridizations
for control synthesis. 
We first describe the hybridization principle. 

\begin{definition}
$\mathcal S=\{S_p|\; p\in P\}$ is a {\it triangulation} of the domain $X$ if the following conditions hold:
for all $S_p \in \mathcal S$, $S_p$ is a full dimensional simplex of $\R^n$ (i.e. $S_p$ is the convex hull of $n+1$ affinely independent points); 
for all $S_{p},\; S_{p'} \in \mathcal S$, their intersection is the convex hull of their common vertices or empty;
$\bigcup_{p\in P} S_p = X$.
\end{definition}

We further assume that $\mathcal S$ contains 
a triangulation of $X_T$, 
$\mathcal S_T \subseteq \mathcal S$.
The main idea of the hybridization principle consists in approximating the dynamics of $\Sigma$ locally, in each simplex $S_p \in {\mathcal S}$, by an affine dynamics with disturbances of the form:
$$
\left\{
\begin{array}{l}
 \dot {\bf x}(t) = A_p{\bf x}(t) + B_p{\bf u}(t)+ a_p + {\bf w}(t),\\
{\bf u}(t) \in U,\; {\bf w}(t) \in W_p
\end{array}
\right.
$$
where the disturbance set $W_p$ is assumed to be a non empty convex compact set;
let us define $\Sigma'=\{(S_p,A_p,B_p,a_p,W_p)|\; p\in P\}$.
With the exception of the disturbance, the system $\Sigma'$ can be seen as a piecewise affine hybrid systems on simplices for which control synthesis techniques have been developed in~\cite{Habets2006}.
The disturbance is added to compensate the approximation error and guarantees the conservativeness of the approximation:  
\begin{definition}
\label{def:conservative} 
$\Sigma'$ is a {\it hybridization} of system $\Sigma$ if and only if
for all $p\in P$, 
\begin{equation}
\label{eq:bound}
\forall x\in S_p, \; \forall u\in U,\; (f(x) + g(x) {u} -A_p x - B_p u- a_p) \in W_p.
\end{equation}
\end{definition}

We now propose a method for the computation of a hybridization.
We do not discuss the computation of a triangulation $\mathcal S$ of the domain $X$.
This is a well studied problem in computational geometry for which 
efficient algorithms exist, at least in low dimensional spaces (see \eg\cite{Orourke1998}). 
In higher dimensional spaces, provided the domain can be partionned in hypercubes, a triangulation of  $X$ 
can be obtained using a simple triangulation of each hypercube as shown in~\cite{Asarin2007}.
It is to be noted that the size of the simplices of the triangulation can generally be made arbitrary small.

We now focus on the computation of the affine dynamics in simplex $S_p \in \mathcal S$.
Let $\{v_{p,0},\dots,v_{p,n}\}$ denote the $n+1$ vertices of the simplex.
Essentially, the map $f(x)$ is approximated by the affine vector field $A_p x+ a_p$ while the map $g(x)$ is approximated
by the constant matrix $B_p$. $A_p$ and $a_p$ can be obtained by interpolation of $f$ at the vertices of the simplex:
$$
\forall i\in \{0,\dots,n\},\; A_p v_{p,i} +a_p = f(v_{p,i}).
$$
Since $v_{p,0},\dots,v_{p,n}$ are affinely independent, this condition uniquely determines $A_p$ and $a_p$
which can be computed by solving a set of linear equations. 
Then, the matrix $B_p$ can be chosen as the barycentric combination of the value of $g$ 
at the vertices of the simplex:
$$
B_p=\frac{1}{n+1} \sum_{i=0}^{n} g(v_{p,i}).
$$
It remains to compute the disturbance set $W_p$ such that equation (\ref{eq:bound}) holds.
For $l\in \{1,\dots,n\}$, let $f_l(x)$ denote the $l$-th component of the vector $f(x)$, $g_{l\bigdot}(x)$ denote the $l$-th line $l$ of the matrix $g(x)$, $a_{p,l}$ denote the $l$-th element of $a_p$, $A_{p,{l\bigdot}}$ and $B_{p,l\bigdot}$ denote the $l$-th lines of 
$A_p$ and $B_p$. The set of disturbances $W_p$ can be chosen as
$W_p =  [\underline\mu_{p,1},\overline\mu_{p,1}] \times \dots \times [\underline\mu_{p,n},\overline\mu_{p,n}]$
where for $l \in \{1,\dots,n\}$, 
$$
\underline\mu_{p,l}= \min_{x\in S_p,u\in U} \left( f_l(x)+g_{l\bigdot}(x)u -A_{p,l\bigdot} x -B_{p,l\bigdot} u -a_{p,l} \right)
$$
and
$$
\overline\mu_{p,l}= \max_{x\in S_p,u\in U} \left( f_l(x)+g_{l\bigdot}(x)u -A_{p,l\bigdot} x -B_{p,l\bigdot} u -a_{p,l} \right).
$$
Then, it is clear that equation (\ref{eq:bound}) holds. The following result shows that for the hybridization defined above, the disturbance set $W_p$
can be made arbitrarily small provided the triangulation of the domain is sufficiently fine.
\begin{proposition}
\label{pro:error2} Let $\Sigma'=\{(S_p,A_p,B_p,a_p,W_p)|\; p\in P\}$ be the hybridization defined above.
Let $\textsf{H}(f_l)$ denote the Hessian matrix of $f_l$ and $\textsf{J}(g_{l\bigdot})$ denote the Jacobian matrix of $g_{l\bigdot}$. Let $\delta_p$ denote the {\it diameter} of the simplex $S_p$: $\delta_p=\max \{\|x-x'\||\; {x,x' \in S_p}\}$. Then, $W_p \subseteq [-\mu_{p,1},\mu_{p,1}] \times \dots \times [-\mu_{p,n},\mu_{p,n}]
$
where for $l \in \{1,\dots,n\}$, 
$$
\mu_{p,l}=\alpha_{p,l} \delta_p^2 + \beta_{p,l} \delta_p \max_{u \in U} \|u\|
$$
with
$
\displaystyle{\alpha_{p,l} = \frac{1}{2} \max_{x\in S_p} \rho(\textsf{H}(f_l)(x))\; \text{ and }\;  \beta_{p,l} =  \max_{x\in S_p} \|\textsf{J}(g_{l\bigdot})(x)\|}.
$
\end{proposition}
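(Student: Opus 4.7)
The plan is to bound, componentwise and uniformly over $(x,u)\in S_p\times U$, the error
$$e_l(x,u) := f_l(x)+g_{l\bigdot}(x)u - A_{p,l\bigdot}x - B_{p,l\bigdot}u - a_{p,l},$$
since by construction of $W_p$ in the previous paragraph, $W_p = \prod_l [\underline\mu_{p,l},\overline\mu_{p,l}]$ with $\underline\mu_{p,l}$ and $\overline\mu_{p,l}$ being precisely the infimum and supremum of $e_l$ over $S_p\times U$. I would split $e_l = e_l^{(1)} + e_l^{(2)}u$, where $e_l^{(1)}(x) := f_l(x) - A_{p,l\bigdot}x - a_{p,l}$ and $e_l^{(2)}(x) := g_{l\bigdot}(x) - B_{p,l\bigdot}$, and bound each piece separately.

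For the first piece, I would exploit the fact that by the defining interpolation conditions $A_p v_{p,i}+a_p = f(v_{p,i})$, the affine map $x\mapsto A_{p,l\bigdot}x+a_{p,l}$ agrees with $f_l$ at all vertices $v_{p,0},\dots,v_{p,n}$. Writing any $x\in S_p$ in barycentric coordinates $x=\sum_{i=0}^n \lambda_i v_{p,i}$, I get $A_{p,l\bigdot}x + a_{p,l} = \sum_i \lambda_i f_l(v_{p,i})$, so a second-order Taylor expansion of $f_l$ at $x$ yields
$$
f_l(x)-\sum_i \lambda_i f_l(v_{p,i}) = -\nabla f_l(x)\cdot\Bigl(\sum_i\lambda_i(v_{p,i}-x)\Bigr) - \tfrac{1}{2}\sum_i \lambda_i (v_{p,i}-x)^\top \textsf{H}(f_l)(\xi_i)(v_{p,i}-x),
$$
for some $\xi_i$ on $[x,v_{p,i}]\subseteq S_p$. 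The first sum vanishes since $\sum_i\lambda_i v_{p,i}=x$, and each quadratic form is bounded in absolute value by $\rho(\textsf{H}(f_l)(\xi_i))\,\|v_{p,i}-x\|^2$ (using symmetry of the Hessian, so that its spectral radius equals its induced Euclidean norm). Since $\|v_{p,i}-x\|\le \delta_p$ and $\sum_i\lambda_i=1$, I conclude $|e_l^{(1)}(x)|\le \alpha_{p,l}\delta_p^2$.

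For the second piece, I would use $B_{p,l\bigdot}=\tfrac{1}{n+1}\sum_i g_{l\bigdot}(v_{p,i})$ to write
$$
e_l^{(2)}(x) = \frac{1}{n+1}\sum_{i=0}^n \bigl(g_{l\bigdot}(x)-g_{l\bigdot}(v_{p,i})\bigr),
$$
and apply the mean value inequality on the convex set $S_p$: each difference has norm at most $\max_{y\in S_p}\|\textsf{J}(g_{l\bigdot})(y)\|\cdot\|x-v_{p,i}\|\le\beta_{p,l}\delta_p$, so $\|e_l^{(2)}(x)\|\le\beta_{p,l}\delta_p$ and $|e_l^{(2)}(x)\cdot u|\le \beta_{p,l}\delta_p\max_{u\in U}\|u\|$ by Cauchy--Schwarz. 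Summing the two bounds gives $|e_l(x,u)|\le\mu_{p,l}$ uniformly on $S_p\times U$, hence $\underline\mu_{p,l}\ge-\mu_{p,l}$ and $\overline\mu_{p,l}\le\mu_{p,l}$, which is the claimed inclusion.

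The only mildly delicate step is the affine interpolation bound — specifically, recognizing that the barycentric-coordinate identity $\sum_i\lambda_i(v_{p,i}-x)=0$ kills the first-order Taylor term and that replacing the operator norm by $\rho(\textsf{H}(f_l))$ is legal because the Hessian is symmetric; the rest is essentially a controlled application of Taylor's theorem and the mean value inequality.
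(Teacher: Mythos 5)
Your proof is correct and follows essentially the same route as the paper: the same split of the error into the interpolation error of $f_l$ and the averaging error of $g_{l\bigdot}$, with the same mean-value-inequality argument for the latter. The only difference is that where the paper cites Waldron (1998) for the bound $|f_l(x)-A_{p,l\bigdot}x-a_{p,l}|\le \alpha_{p,l}\delta_p^2$, you supply a self-contained and valid derivation via Taylor expansion in barycentric coordinates, correctly using $\sum_i\lambda_i(v_{p,i}-x)=0$ to kill the first-order term and the symmetry of the Hessian to pass to the spectral radius.
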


\begin{proof}
It is clearly sufficient to show that for all $x\in S_p$, for all $l\in \{1,\dots,n\}$
$$
 |f_l(x) -A_{p,l\bigdot} x -a_{p,l}| \le \alpha_{p,l} \delta_p^2 \text{ and }
 \|g_{l\bigdot}(x) - B_{p,l\bigdot}\| \le \beta_{p,l} \delta_p.
$$
A proof of the first inequality can be found in~\cite{Waldron1998}.
The second inequality is derived as follows. From the mean value theorem for vector-valued function we have for all $x_1$, $x_2$ in $S_p$
\begin{eqnarray*}
\| g_{l\bigdot}(x_1) - g_{l\bigdot}(x_2)\| \le  \max_{x\in S_p} \|\textsf{J}(g_{l\bigdot})(x)\| 
\|x_1-x_2\| \le \max_{x\in S_p} \|\textsf{J}(g_{l\bigdot})(x)\|  \delta_p.
\end{eqnarray*}
Then, it follows that for all $x\in S_p$,
\begin{eqnarray*}
\|g_{l\bigdot}(x) - B_{p,l\bigdot}\|& \le & \|g_{l\bigdot}(x) - \frac{1}{n+1} \sum_{i=0}^{n} g_{l\bigdot}(v_{p,i})\|
\le \frac{1}{n+1} \sum_{i=0}^{n} \| g_{l\bigdot}(x) - g_{l\bigdot}(v_{p,i})\| \\
&\le &\frac{1}{n+1} \sum_{i=0}^{n}  \max_{x\in S_p} \|\textsf{J}(g_{l\bigdot})(x)\|  \delta_p
=  \max_{x\in S_p} \|\textsf{J}(g_{l\bigdot})(x)\|  \delta_p.
\end{eqnarray*}
\end{proof}

We can see that the bound on the disturbance depends linearly in the diameter of the simplex.
This means that to reduce the bounds by factor $1/2$ it is necessary to consider $2^n$ times more simplices in the trinagulation.
Hopefully, as we shall see on a practical example, it is not always necessary to consider very fine partitions for our approach to be successful.

The rest of the paper is devoted to solving Problem~\ref{prob:motion}. 
Our approach is inspired by the work presented in~\cite{Habets2006} for piecewise affine hybrid systems on simplices.

\section{Robust Controllers on Simplices}

We need to extend several techniques developed for the class of affine systems on simplices to the class of systems with disturbances. The results are stated without proofs which are straightforward
adaptations of the proofs in~\cite{Habets2006,Roszak2006} and can be found in~\cite{Girard2008}. Let $S$ be a simplex of $\R^n$, we denote $v_0, \dots , v_n$ and $F_0,\dots, F_n$ the vertices and the facets of $S$
with the convention that $F_j$ is the facet opposite to vertex $v_j$. 
$m_0,\dots,m_n$ denote the outward unit normal vectors of the facets of $S$.

\subsection{Affine systems with disturbances}

We consider the following autonomous affine system with disturbances~:
\begin{equation}
\label{eq:aff}
\dot {\bf x}(t) = A{\bf x}(t) + a + {\bf w}(t), \; {\bf x}(t) \in \R^n,\; {\bf w}(t) \in W
\end{equation}
where the disturbance set $W$ is a non empty convex compact set. 
It is sufficient to assume that the disturbance ${\bf w}$ is a continuous map.
We say that a trajectory {\bf x} of (\ref{eq:aff}), starting in $S$, {\it exits} $S$ at time $T\ge 0$, 
if there exists $\varepsilon >0$ such that
$$
\forall t \in [0,T],\; {\bf x}(t) \in S \text{ and } \forall t\in (T,T+\varepsilon),\; {\bf x}(t) \notin S .
$$
We shall say that a facet $F_j$ is {\it blocked} if 
$$
\forall x\in F_j,\; m_j\cdot (Ax+a) \le -\maxmdj.
$$
The fact that a facet is blocked can be characterized using only the value of the vector field at the vertices:
\begin{proposition}
\label{pro:blocked} The facet $F_j$ is blocked if and only if 
$$
\forall i\in \{0,\dots,n\},\; i\ne j, \;
m_j\cdot (Av_i+a) \le -\maxmdj.
$$
\end{proposition}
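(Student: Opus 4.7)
My plan is to exploit the fact that $x \mapsto m_j \cdot (A x + a)$ is an affine function of $x$, while the right-hand side $-\maxmdj$ does not depend on $x$. So the inequality that defines ``blocked'' is an affine inequality on $F_j$, and by standard convex analysis such an inequality holds on a simplex if and only if it holds at all of its vertices. The vertices of the facet $F_j$ are precisely the $v_i$ with $i \ne j$, which accounts for the vertex set appearing in the proposition.

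The forward direction is immediate: if the inequality $m_j \cdot (Ax + a) \le -\maxmdj$ holds for every $x \in F_j$, then in particular it holds for $x = v_i$ whenever $i \ne j$, since all such $v_i$ lie in $F_j$.

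For the converse, I would take an arbitrary $x \in F_j$ and write it as a convex combination of the vertices of $F_j$, i.e.
\begin{equation*}
x = \sum_{i \ne j} \lambda_i v_i, \qquad \lambda_i \ge 0, \qquad \sum_{i \ne j} \lambda_i = 1,
\end{equation*}
which is possible because $F_j$ is the convex hull of $\{v_i \mid i \ne j\}$. Using that $\sum_{i\ne j} \lambda_i = 1$, one has
\begin{equation*}
m_j \cdot (Ax + a) = \sum_{i \ne j} \lambda_i \, m_j \cdot (A v_i + a) \le \sum_{i \ne j} \lambda_i \cdot \bigl(-\maxmdj\bigr) = -\maxmdj,
\end{equation*}
where the middle step uses the hypothesis applied at each $v_i$. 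This concludes the converse direction.

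There is essentially no obstacle in this argument; the only minor subtlety is that the map is affine rather than linear, which is dealt with by keeping the constant term $a$ inside the expression $A v_i + a$ and exploiting that the barycentric weights sum to one. The proof relies only on the affinity of the vector field (without disturbance) and on the geometry of the simplex, and so it does not interact with the disturbance set at all beyond the fact that $\maxmdj$ is a fixed scalar.
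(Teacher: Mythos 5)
Your proof is correct and is exactly the standard argument: the paper omits the proof (deferring to \cite{Habets2006,Roszak2006,Girard2008}), and the intended reasoning there is the same convexity/barycentric-combination argument you give, using that $x\mapsto m_j\cdot(Ax+a)$ is affine and that the vertices of $F_j$ are the $v_i$ with $i\ne j$.
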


The following proposition characterizes the points where the trajectories may exit the simplex.
\begin{proposition}\label{pro:blocked2} If a trajectory {\bf x} of (\ref{eq:aff}), starting
in $S$, exits $S$ at time $T$, then ${\bf x}(T)$ belongs to a facet of $S$ that is not blocked.
\end{proposition}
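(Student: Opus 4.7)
I would argue the contrapositive: assume every facet of $S$ containing $\mathbf{x}(T)$ is blocked, and show that $\mathbf{x}(t)\in S$ for all $t$ in a right neighborhood of $T$, contradicting the exit. Set $J=\{j:\mathbf{x}(T)\in F_j\}$ and, for each $j$, let $\phi_j(x)=m_j\cdot x-b_j$ with $b_j=m_j\cdot v$ for any vertex $v\in F_j$, so that $S=\{x:\phi_j(x)\le 0\text{ for every }j\}$, $\phi_j(\mathbf{x}(T))=0$ for $j\in J$, and $\phi_j(\mathbf{x}(T))<0$ strictly for $j\notin J$. Since $\phi_j\circ \mathbf{x}$ is continuous, the strict inequalities persist on a right neighborhood of $T$, so the task reduces to showing $\phi_j(\mathbf{x}(t))\le 0$ for each $j\in J$ and $t$ slightly larger than $T$.

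For each $j\in J$, I would decompose $\mathbf{x}(t)=y_j(t)+\phi_j(\mathbf{x}(t))\,m_j$, where $y_j(t)$ is the orthogonal projection of $\mathbf{x}(t)$ onto the hyperplane of $F_j$, and differentiate to obtain
\[
\dot\phi_j(\mathbf{x}(t))=m_j\cdot\bigl(Ay_j(t)+a+\mathbf{w}(t)\bigr)+(m_j^{\top}Am_j)\,\phi_j(\mathbf{x}(t)).
\]
So long as $y_j(t)\in F_j$, the blocked condition (which by the definition, and equivalently by Proposition~\ref{pro:blocked}, holds pointwise on $F_j$ uniformly in $w\in W$) forces the first summand to be nonpositive, leaving the linear differential inequality $\dot\phi_j\le (m_j^{\top}Am_j)\,\phi_j$. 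Because $\phi_j(\mathbf{x}(T))=0$, Gr\"onwall's inequality then yields $\phi_j(\mathbf{x}(t))\le 0$ on a right neighborhood of $T$. This disposes of the case where $\mathbf{x}(T)$ lies in the relative interior of a single $F_j$, since by continuity $y_j(t)$ stays in $F_j$ for small $t-T$.

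The main obstacle is the case $|J|\ge 2$, when $\mathbf{x}(T)$ sits on a lower-dimensional face of $S$ (an edge, a vertex, etc.). Then for small $t-T>0$ the projection $y_j(t)$ may drift out of $F_j$ across a sibling facet $F_{j'}$ with $j'\in J$, and the pointwise blocked inequality at $y_j(t)$ is no longer available. I would handle this by arguing simultaneously over all $j\in J$: the conjunction of blocked conditions constrains $A\mathbf{x}(T)+a+\mathbf{w}(T)$ to the tangent cone $T_S(\mathbf{x}(T))=\{v:m_j\cdot v\le 0\text{ for all }j\in J\}$, and a neighborhood argument combining continuity of $\mathbf{w}$ with the affine form of the dynamics rules out any immediate outward excursion through any $F_j$, $j\in J$; when $y_j(t)$ leaves $F_j$ across $F_{j'}$, the Gr\"onwall bound for $\phi_{j'}$ takes over. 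The technical adaptation closely follows the arguments of~\cite{Habets2006,Roszak2006}, with the only real novelty being the accommodation of the disturbance set $W$ through the uniform bound $m_j\cdot w\le \maxmdj$.
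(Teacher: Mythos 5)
The paper itself states this proposition without proof (it is deferred to~\cite{Habets2006,Roszak2006} and~\cite{Girard2008}), so there is no in-paper argument to compare against; judged on its own merits, your attempt is correct and complete only in the easy case and has a genuine gap exactly where you flag ``the main obstacle.'' The contrapositive setup, the decomposition $\mathbf{x}(t)=y_j(t)+\phi_j(\mathbf{x}(t))m_j$, the resulting inequality $\dot\phi_j\le (m_j^{\top}Am_j)\phi_j$ valid while $y_j(t)\in F_j$, and the Gr\"onwall conclusion are all sound, and they do settle the case where $\mathbf{x}(T)$ lies in the relative interior of a single facet. But for $|J|\ge 2$ the two devices you invoke do not close the argument. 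The tangent-cone observation is a non-strict first-order condition at the single point $\mathbf{x}(T)$ and cannot by itself preclude exit: if $m_j\cdot\dot{\mathbf{x}}(T)=0$, whether $\phi_j$ turns positive is decided at second order, and the only thing that controls it is the blocked inequality on the whole facet --- precisely the information you lose once $y_j(t)$ leaves $F_j$. And ``the Gr\"onwall bound for $\phi_{j'}$ takes over'' bounds $\phi_{j'}$, not $\phi_j$; once the differential inequality for $\phi_j$ breaks down, controlling a different coordinate function does not restore it. As written, the lower-dimensional-face case is a plan, not a proof.

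A clean way to close the gap is to abandon the per-facet hyperplane projections and use the projection $\pi(t)$ of $\mathbf{x}(t)$ onto the convex set $S$ itself, working with $d(t)=\operatorname{dist}(\mathbf{x}(t),S)$. For $t$ near $T$, $\pi(t)$ is near $\mathbf{x}(T)$ and hence lies only on facets $F_j$ with $j\in J$ (the others are at positive distance from $\mathbf{x}(T)$). When $\mathbf{x}(t)\notin S$, the unit vector $\nu(t)=(\mathbf{x}(t)-\pi(t))/d(t)$ belongs to the normal cone of $S$ at $\pi(t)$, i.e.\ it is a nonnegative combination of those $m_j$, so the blocked conditions give $\nu(t)\cdot\bigl(A\pi(t)+a+\mathbf{w}(t)\bigr)\le 0$ and therefore $D^{+}d(t)\le \nu(t)\cdot A\bigl(\mathbf{x}(t)-\pi(t)\bigr)\le\|A\|\,d(t)$. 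Gr\"onwall with $d(T)=0$ forces $d\equiv 0$ on a right neighborhood of $T$, contradicting the exit. This single estimate replaces your case analysis over which facet the projection drifts across, and is the standard Nagumo-type localization underlying the cited proofs.
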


We now give necessary and sufficient conditions such that all the trajectories of (\ref{eq:aff}), starting
in $S$, exit $S$ in finite time.
\begin{proposition}
\label{pro:exit1}
All the trajectories of (\ref{eq:aff}) starting in $S$, exit $S$ in finite time if and only if
$$
\exists \xi \in \R^n,\; \forall x\in S,\; \xi \cdot (Ax+a) > -\min_{w\in W}\xi\cdot w.
$$
\end{proposition}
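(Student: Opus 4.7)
The plan is to reformulate the scalar condition geometrically and then prove the two implications separately via a compactness argument and a separating hyperplane argument respectively. Observe first that since $W$ is compact and $w \mapsto \xi \cdot w$ is continuous, the minimum is attained, so the condition $\xi\cdot(Ax+a) > -\min_{w\in W}\xi\cdot w$ holding for every $x\in S$ is equivalent to
$$
\xi\cdot(Ax + a + w) > 0 \quad \text{for all } (x,w)\in S\times W,
$$
i.e.\ $\xi$ strictly separates the origin from the set
$$
V = \{Ax + a + w : x\in S,\; w\in W\}.
$$
The crucial structural fact I will use is that $V$ is compact, as the continuous image of $S\times W$, and convex, as the Minkowski sum of the affine image $AS+a$ of the convex set $S$ with the convex set $W$.

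For sufficiency, I would use the reformulation together with compactness of $S\times W$ to upgrade the strict inequality to a uniform bound $\xi\cdot(Ax+a+w) \ge c$ for some $c>0$. Then for any trajectory $\mathbf{x}$ of (\ref{eq:aff}) remaining in $S$ on $[0,t]$,
$$
\xi\cdot\mathbf{x}(t) - \xi\cdot\mathbf{x}(0) = \int_0^t \xi\cdot(A\mathbf{x}(s)+a+\mathbf{w}(s))\,ds \;\ge\; ct,
$$
while $\xi\cdot S$ is bounded. Hence the trajectory must leave $S$ no later than time $(\max_{y\in S}\xi\cdot y - \min_{y\in S}\xi\cdot y)/c$, which proves exit in finite time.

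For necessity I argue contrapositively: assuming no such $\xi$ exists, I construct a trajectory that never exits $S$. The absence of a separating $\xi$ means that $0$ cannot be strictly separated from $V$; since $V$ is compact and convex, the strict separation theorem forces $0\in V$. Thus there exist $x^*\in S$ and $w^*\in W$ with $Ax^*+a+w^* = 0$. Choosing the continuous disturbance $\mathbf{w}(t)\equiv w^*$ together with initial condition $x^*$ yields the constant trajectory $\mathbf{x}(t)\equiv x^*$, which is a valid solution of (\ref{eq:aff}) remaining in $S$ for all $t\ge 0$ and therefore never exits, contradicting the hypothesis.

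I expect necessity to be the main obstacle. The delicate point is that $V$ must be not only convex but closed, so that strict (rather than merely weak) separation is available and one genuinely obtains an equilibrium point inside $S$ rather than only a limiting configuration. This is where compactness of $S$ and $W$ together with continuity of $(x,w)\mapsto Ax+a+w$ must be invoked. Convexity of $W$ is equally essential: without it, one could not amalgamate disturbances associated with different $x$-values into a single constant $w^*\in W$ producing a genuine equilibrium.
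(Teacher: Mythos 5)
Your proof is correct and is essentially the argument the paper intends: the paper itself omits the proof of Proposition~\ref{pro:exit1} (deferring to the cited works of Habets--van Schuppen, Roszak--Broucke and the preliminary version), and the standard proof there uses exactly your two ingredients --- for sufficiency, the linear functional $t\mapsto \xi\cdot\mathbf{x}(t)$ increasing at a uniform rate $c>0$ obtained by compactness of $S\times W$, and for necessity, the observation that failure of strict separation of $0$ from the compact convex set $(AS+a)+W$ forces an equilibrium $Ax^*+a+w^*=0$ realizable with a constant (hence admissible, continuous) disturbance. Your closing remarks correctly identify convexity and closedness of $W$ as the points where the disturbance genuinely enters the argument.
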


The previous property can again be characterized using only the value of the affine vector field at the vertices of the simplex:
\begin{proposition}
\label{pro:exit2}
All the trajectories of (\ref{eq:aff}) starting in $S$, exit $S$ in finite time if and only if
$$
\exists \xi \in \R^n,\; \forall i\in \{0,\dots,n\},\; \xi \cdot (Av_i+a) > -\min_{w\in W}\xi\cdot w.
$$
\end{proposition}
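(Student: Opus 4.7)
The plan is to deduce this from Proposition~\ref{pro:exit1} by observing that the vertex condition is equivalent to the condition over all of $S$, using the fact that $x \mapsto \xi \cdot (Ax+a)$ is an affine function of $x$ and that $S$ is the convex hull of its vertices.

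First, the forward direction is immediate: Proposition~\ref{pro:exit1} supplies some $\xi \in \R^n$ with $\xi \cdot (Ax+a) > -\min_{w\in W} \xi\cdot w$ for every $x \in S$; specialize this to $x = v_i$ for $i=0,\dots,n$.

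For the converse, I would take the $\xi$ given by the vertex hypothesis and show that it still works for all $x \in S$, thereby reducing to Proposition~\ref{pro:exit1}. Fix $x \in S$; write $x = \sum_{i=0}^n \lambda_i v_i$ with $\lambda_i \ge 0$ and $\sum_i \lambda_i = 1$. By affinity,
\begin{equation*}
\xi \cdot (Ax + a) = \sum_{i=0}^n \lambda_i\, \xi \cdot (A v_i + a).
\end{equation*}
Each summand satisfies $\lambda_i\, \xi \cdot (A v_i + a) \ge \lambda_i \cdot \bigl(-\min_{w\in W}\xi\cdot w\bigr)$, with strict inequality whenever $\lambda_i > 0$. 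Since at least one $\lambda_i$ is positive, summing yields
\begin{equation*}
\xi \cdot (Ax + a) > -\min_{w\in W} \xi\cdot w,
\end{equation*}
which is exactly the hypothesis of Proposition~\ref{pro:exit1} applied at $x$.

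There is no real obstacle here: the only point requiring a moment of care is preserving the strict inequality through the convex combination, which is handled by noting that the $\lambda_i$ are nonnegative and sum to one, so a convex combination of strict inequalities (all in the same direction) remains strict.
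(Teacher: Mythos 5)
Your proof is correct and follows the standard route: the paper omits this proof (deferring to the cited references), and the intended argument there is exactly your reduction to Proposition~\ref{pro:exit1} via the affinity of $x \mapsto \xi\cdot(Ax+a)$ and convexity of $S$, with the strictness preserved as you note. Nothing further is needed.
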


\subsection{Robust controller synthesis}
\label{sec:con}
Let us now consider an affine control systems with disturbances on simplices of the form:
$$
\dot {\bf x}(t) = A{\bf x}(t) + B{\bf u}(t)+ a + {\bf w}(t), \; {\bf x}(t) \in \R^n,\; {\bf u}(t) \in U,\; {\bf w}(t) \in W
$$
where the control domain $U\subseteq \R^p$ is a convex compact polytope and the disturbance set $W$ is a non empty convex compact set. The disturbance ${\bf w}$ is assumed to be continuous.
We consider the following control problem:

\begin{problem}
\label{prob:exit} Consider a subset of indices $\mathcal J \subseteq \{0,\dots,n\}$,
and the associated subset of facets $\mathcal F = \{F_j|,\; j\in \mathcal J\}$, 
design an affine feedback controller $h: \R^n \rightarrow \R^m$, $h(x)=Kx+k$, such that
for all $x\in S$, $h(x)\in U$ and for the autonomous affine system with disturbances
\begin{equation}
\label{eq:affcon}
 \dot {\bf x}(t) = A{\bf x}(t) + Bh({\bf x}(t))+ a + {\bf w}(t),\; {\bf x}(t) \in \R^n,\; {\bf w}(t) \in W\; 
\end{equation}
all facets that are not in $\mathcal F$ are blocked and all trajectories of (\ref{eq:affcon})
starting in $S$ 
exits $S$ in finite time.
\end{problem}

We denote $u_0,\dots,u_n\in U$ the values of the controller at the vertices of $S$:
$$
u_i=h(v_i)=K v_i + k,\; i\in \{0,\dots,n\}.
$$
Since $v_0,\dots,v_n$ are affinely independent, $u_0,\dots,u_n$ uniquely determine the matrix $K$ and the vector $k$.
Moreover since $u_0,\dots,u_n\in U$ and for all $x\in S$, $h(x)$ is a convex combination of $u_0,\dots,u_n$ it follows that
$h(x)\in U$, for all $x\in S$.
At the vertices of the simplex $S$, the value of the vector field of (\ref{eq:affcon})
 is given by
$
A v_i + Bh(v_i) +a = Av_i +B u_i +a$.
In the following we characterize suitable values of $u_0,\dots,u_n$ with the understanding that these allows the computation of the
affine controller $h$. The following result is a direct consequence of Propositions~\ref{pro:blocked} and~\ref{pro:exit2}.

\begin{proposition}
\label{pro:exit3}
For $i\in \{0,\dots,n\}$, let us consider the following convex polytopes:
$$
U_i
=\left\{u\in U \left|
m_j\cdot(Av_i +B u +a) \le -\maxmdj, \;
\forall j \in \{0,\dots,n\}\setminus( \mathcal J\cup\{ i\} )
\right.
\right\}.
$$
An affine feedback controller $h$ solves Problem~\ref{prob:exit} if and only if for all $i\in \{0,\dots,n\}$, $u_i \in U_i$ and 
$$
\exists \xi \in \R^n,\; \forall i\in \{0,\dots,n\},\; \xi \cdot (Av_i+Bu_i+a) > -\min_{w\in W}\xi\cdot w.
$$
\end{proposition}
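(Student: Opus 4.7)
The plan is to rewrite the closed-loop system of Problem~\ref{prob:exit} as an autonomous affine system with disturbances of the form (\ref{eq:aff}), and then read off the two required properties from Propositions~\ref{pro:blocked} and~\ref{pro:exit2}. Concretely, with $h(x)=Kx+k$ the closed loop (\ref{eq:affcon}) becomes $\dot{\mathbf x}(t)=(A+BK)\mathbf x(t)+(a+Bk)+\mathbf w(t)$, and its vector field at the vertex $v_i$ equals $Av_i+Bu_i+a$, where $u_i=h(v_i)$. So both the blocking conditions and the exit condition only depend on $h$ through the tuple $(u_0,\dots,u_n)$.

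Next I would dispatch the two parts of the specification in Problem~\ref{prob:exit} separately. First, $h$ sends $S$ into $U$: since $v_0,\dots,v_n$ are affinely independent, any assignment of $u_0,\dots,u_n\in U$ uniquely determines $(K,k)$, and conversely $h(x)$ on $S$ is the convex combination of the $u_i$'s with barycentric weights, hence lies in $U$ whenever $u_i\in U$. This is already noted in the text, so $h(S)\subseteq U$ is equivalent to $u_i\in U$ for all $i$. Second, for each $j\notin\mathcal J$ the facet $F_j$ must be blocked in the closed loop; by Proposition~\ref{pro:blocked} applied to $(A+BK,a+Bk)$, this is
\begin{equation*}
\forall i\ne j,\quad m_j\cdot(Av_i+Bu_i+a)\le -\maxmdj.
\end{equation*}
Swapping the quantifiers (fix $i$, then range over $j\in\{0,\dots,n\}\setminus(\mathcal J\cup\{i\})$) exactly matches the definition of the polytope $U_i$. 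Combining with the $U$-constraint from the first part gives $u_i\in U_i$ for every $i$. Third, the finite-exit requirement for the closed-loop system is, by Proposition~\ref{pro:exit2}, the existence of $\xi\in\R^n$ with $\xi\cdot(Av_i+Bu_i+a)>-\min_{w\in W}\xi\cdot w$ for all $i$.

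Putting these three equivalences together yields both directions of the proposition. The only mildly delicate bookkeeping is to verify that the index set appearing in the definition of $U_i$ is correct: a vertex $v_i$ that happens to lie on a blocked facet $F_j$ (with $j\ne i$) must still satisfy the blocking inequality at $v_i$, whereas no blocking constraint is imposed for $j=i$ (since Proposition~\ref{pro:blocked} only quantifies over $i\ne j$) nor for $j\in\mathcal J$ (since facets in $\mathcal F$ are allowed to be non-blocked). This is precisely the exclusion of $\mathcal J\cup\{i\}$ in the definition of $U_i$. I expect this reindexing, rather than any analytic step, to be the only point where a reader could slip up; once it is checked, the proof is a direct application of the two previously proved propositions.
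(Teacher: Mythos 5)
Your argument is correct and is exactly the route the paper intends: it states Proposition~\ref{pro:exit3} as a direct consequence of Propositions~\ref{pro:blocked} and~\ref{pro:exit2}, and you have simply filled in the details (closed-loop vector field at the vertices, the quantifier swap giving the polytopes $U_i$, and convexity for $h(S)\subseteq U$). Nothing further is needed.
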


Let us discuss the computation of input values $u_0,\dots,u_n$ satisfying the conditions of the previous proposition. 
We denote by $\overline U_i$ the set of vertices of $U_i$.
\begin{proposition}
\label{pro:exit}
There exists an affine controller $h$ solving Problem~\ref{prob:exit} if and only if there exists $u_0\in \overline U_0,\dots,u_n\in \overline U_n$ such that
$$
\exists \xi \in \R^n,\; \forall i\in \{0,\dots,n\},\; \xi \cdot (Av_i+Bu_i+a) > -\min_{w\in W}\xi\cdot w.
$$
\end{proposition}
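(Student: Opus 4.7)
The plan is to derive this from Proposition \ref{pro:exit3}, which already characterizes feasibility in terms of an arbitrary choice of $u_i\in U_i$. So the sufficiency direction is immediate: if $u_i\in \overline U_i$ satisfy the stated inequality, then since $\overline U_i\subseteq U_i$, the controller $h$ interpolating the values $u_0,\dots,u_n$ at the vertices $v_0,\dots,v_n$ of $S$ satisfies the hypotheses of Proposition \ref{pro:exit3} and hence solves Problem \ref{prob:exit}.

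The substance is in the necessity direction. Suppose $h$ solves Problem \ref{prob:exit}. By Proposition \ref{pro:exit3}, there exist inputs $u_i\in U_i$ and a direction $\xi\in\R^n$ such that
\begin{equation*}
\xi\cdot(Av_i+Bu_i+a) > -\min_{w\in W}\xi\cdot w, \quad \forall i\in\{0,\dots,n\}.
\end{equation*}
The key observation is that we may \emph{keep the same $\xi$} and only adjust the inputs. For each fixed $i$, the map $u\mapsto \xi\cdot(Av_i+Bu+a) = (B^\top\xi)\cdot u + \xi\cdot(Av_i+a)$ is an affine (hence linear-plus-constant) function of $u$. The feasible set $U_i$ is a convex polytope, since it is the intersection of the compact polytope $U$ with finitely many closed halfspaces $\{u:m_j\cdot(Av_i+Bu+a)\le -\max_{d\in D}m_j\cdot d\}$; in particular $U_i$ is compact and has finitely many vertices $\overline U_i$.

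The maximum of a linear functional over a nonempty compact polytope is attained at a vertex. Hence for each $i$ there exists $u_i^\star\in\overline U_i$ with
\begin{equation*}
\xi\cdot(Av_i+Bu_i^\star+a) \;\ge\; \xi\cdot(Av_i+Bu_i+a) \;>\; -\min_{w\in W}\xi\cdot w .
\end{equation*}
The tuple $(u_0^\star,\dots,u_n^\star)\in \overline U_0\times\cdots\times\overline U_n$ then witnesses the claimed inequality with the same $\xi$, completing the proof. There is no real obstacle beyond noticing that one is free to fix $\xi$ and that the inequality is strict (so it is preserved when passing to a maximizing vertex); the only thing worth spelling out carefully is that $U_i$ is indeed a compact polytope so that $\overline U_i$ is well defined and nonempty whenever $u_i\in U_i$ exists.
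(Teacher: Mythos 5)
Your proof is correct and follows the intended route: sufficiency is immediate from Proposition~\ref{pro:exit3} since $\overline U_i\subseteq U_i$, and necessity follows by fixing the direction $\xi$ from Proposition~\ref{pro:exit3} and replacing each $u_i\in U_i$ by a vertex of the nonempty compact polytope $U_i$ that maximizes the linear functional $u\mapsto (B^\top\xi)\cdot u$, which can only increase the left-hand side and so preserves the strict inequality. The paper omits this proof (deferring to the cited references), but your argument is exactly the standard one used there.
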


A controller solving Problem~\ref{prob:exit} can therefore be synthesized by computing the vertices of the polytopes $U_0,\dots,U_n$ and then looking for a suitable combination of these vertices. 

\section{Nonlinear Control Synthesis}

We now describe a solution to Problem~\ref{prob:motion} based on the results described in the previous sections. 
Let $\Sigma'$ be a hybridization of $\Sigma$. Our approach is based on the coordination
of robust affine controllers defined on the simplices of the triangulation $\mathcal S$. 

\subsection{Invariance controller} We start by synthesizing a controller in order to render the set $X_T$ of target states invariant.
For all simplices $S_p\in {\mathcal S}_T$ of the triangulation of $X_T$, 
we denote by $\textit{vertices}(S_p)$ its set of vertices
and by $\textit{external-facets}(S_p,X_T)$ the set of facets of $S_p$
that are included in the boundary of the polytope $X_T$. For a given facet $F\in \textit{external-facets}(S_p,X_T)$, we denote by $m_F$ its outward unit normal vector and  by $\textit{vertices}(F)$ its set of vertices.

\begin{problem}[Invariance control]
\label{prob:invariance}
Synthesize a Lipschitz continuous controller $h_{T}: X_T \rightarrow U$, such that for all $x_0 \in X_T$,
the solution of 
\begin{equation}
\label{eq:feed}
\dot {\bf x}(t) = f( {\bf x}(t)) + g({\bf x}(t)) h_{T}( {\bf x}(t)),\; {\bf x}(0)=x_0
\end{equation}
is defined for all $t\in \R^+$.
\end{problem}

Let us remark that since $h_T$ is defined only on $X_T$, this implies that ${\bf x}(t) \in X_T$ for all $t\in \R^+$.
We will search the controller $h_T$ as a continuous piecewise affine map defined on the triangulation ${\mathcal S}_T$
(i.e. $h_T$ is affine on each simplex of the triangulation). $h_T$ is uniquely determined by its value
at the vertices of the simplices of ${\mathcal S}_T$. If all these
values belong to $U$ then so does $h_T(x)$, for all $x\in X_T$.
Also, it can be shown~\cite{Girard2004} that $h_{T}$, defined this way,
is Lispchitz on $X_T$.
Then,
from the standard characterization of invariant sets~\cite{Aubin91} we have the following result:
\begin{lemma}\label{lem:invariant} $h_T: X_T \rightarrow U$ solves Problem~\ref{prob:invariance} if and only
if for all $S_p\in {\mathcal S}_T$, for all $F\in \textit{external-facets}(S_p,X_T)$,
for all $x\in F$, $m_F\cdot ( f(x) + g(x) h_T(x) ) \le 0$.
\end{lemma}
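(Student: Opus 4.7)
The plan is to recognize Lemma~\ref{lem:invariant} as the Nagumo-type viability condition for the polytopal set $X_T$ under the closed-loop vector field $F(x) := f(x) + g(x) h_T(x)$. First I would observe that $F$ is Lipschitz continuous on the compact set $X_T$, since $f$ is $\mathcal C^2$, $g$ is $\mathcal C^1$, and $h_T$ is Lipschitz by the piecewise affine construction noted in the text. Because $h_T$ is defined only on $X_T$, requiring the solution of (\ref{eq:feed}) to exist for all $t \in \R^+$ is equivalent to demanding that the trajectory never leaves $X_T$; thus Problem~\ref{prob:invariance} is equivalent to the positive invariance of $X_T$ under $F$.

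Next I would invoke the standard characterization of invariance from~\cite{Aubin91}: a closed set $K$ is positively invariant under a Lipschitz vector field $F$ if and only if $F(x)$ belongs to the contingent (Bouligand) tangent cone $T_K(x)$ at every $x \in K$. For $K = X_T$, the remaining task is to express $T_{X_T}(x)$ in terms of the triangulation $\mathcal S_T$. At interior points the cone is all of $\R^n$, so the condition is automatic. At a boundary point $x$, the triangulation provides a local half-space description: in a neighborhood of $x$, $X_T$ coincides with the intersection of the closed half-spaces $\{y : m_F\cdot(y-x)\le 0\}$ taken over the external facets $F$ of simplices $S_p\in \mathcal S_T$ that contain $x$. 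Consequently,
$$
T_{X_T}(x) = \{v \in \R^n : m_F \cdot v \le 0 \text{ for every external facet } F \ni x\}.
$$
Substituting $v = f(x) + g(x) h_T(x)$ and collecting the inequalities over all boundary points yields precisely the facet-wise condition in the lemma, giving both implications.

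The main obstacle is the local polyhedral description of $X_T$ at boundary points where several simplices meet: since $X_T$ need not be globally convex, one must verify that the simplices of $\mathcal S_T$ incident to $x$ cover a full neighborhood of $x$ in $X_T$, so that the only outward directions are exactly those violating an external-facet inequality. This follows from the definition of a triangulation — adjacent simplices share whole common faces and the union of $\mathcal S_T$ equals $X_T$ — together with the observation that a non-external facet is shared with an interior neighbor and therefore contributes no constraint to $T_{X_T}(x)$. Once this local description is in hand, the equivalence is immediate from the Nagumo characterization and the continuity of $f$, $g$, $h_T$, which allows the pointwise condition on each facet $F$ to be checked by evaluation on $F$ itself.
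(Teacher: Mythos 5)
Your proof is correct and takes essentially the same route as the paper, which justifies the lemma in one line by invoking the standard (Nagumo/viability) characterization of invariant sets from \cite{Aubin91}; you have simply spelled out that intended argument, including the reduction of global existence to forward invariance and the facet-wise description of the tangent cone. One small remark: since $X_T$ is assumed to be a compact polytope, hence convex, your worry about non-convexity is moot --- and in fact the formula $T_{X_T}(x)=\{v\in\R^n : m_F\cdot v\le 0 \text{ for every external facet } F\ni x\}$ would fail at reflex corners of a genuinely non-convex region, so the convexity of $X_T$ is quietly essential to the ``only if'' direction.
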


We now characterize the values of $h_T$ at the vertices of the simplices of $\mathcal S_T$ that 
result in controllers solving Problem~\ref{prob:invariance}:
\begin{proposition}
\label{pro:invert}
If for all $S_p\in {\mathcal S}_T$, for all $F\in \textit{external-facets}(S_p,X_T)$, for all $v \in \textit{vertices}(F)$  
\begin{equation}
\label{eq:cond1}
m_F\cdot (A_p v + B_p h_T(v) +a_p)  \le -\max_{w\in W_p} m_F \cdot w
\end{equation}
then $h_T$ solves  Problem~\ref{prob:invariance}. 
\end{proposition}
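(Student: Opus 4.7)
The plan is to reduce the claim to the sufficient condition provided by Lemma~\ref{lem:invariant}, then propagate the vertex hypotheses to whole facets via Proposition~\ref{pro:blocked}, and finally absorb the disturbance bound by means of the hybridization property from Definition~\ref{def:conservative}.

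First I fix an arbitrary simplex $S_p\in \mathcal S_T$ and an external facet $F\in \textit{external-facets}(S_p,X_T)$. Since $h_T$ is affine on $S_p$, the closed-loop vector field $x\mapsto A_p x + B_p h_T(x)+a_p$ is affine in $x$ on $S_p$; write it as $\tilde A_p x + \tilde a_p$. The assumption of the proposition says exactly that at every vertex $v$ of $F$,
$$
m_F\cdot (\tilde A_p v + \tilde a_p)\ \le\ -\max_{w\in W_p} m_F\cdot w.
$$
Applying Proposition~\ref{pro:blocked} to the affine vector field $\tilde A_p x+\tilde a_p$ on the simplex $S_p$ (the vertices of $F$ are precisely the $n$ vertices of $S_p$ distinct from the one opposite to $F$, and the disturbance set is $W_p$), I conclude that the facet $F$ is blocked, i.e. for \emph{every} $x\in F$,
$$
m_F\cdot (A_p x + B_p h_T(x) + a_p)\ \le\ -\max_{w\in W_p} m_F\cdot w.
$$

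Second, I invoke the hybridization inequality~(\ref{eq:bound}): for every $x\in S_p$ (hence every $x\in F$), and for the admissible control value $u=h_T(x)\in U$, the vector $f(x)+g(x)h_T(x)-A_p x-B_p h_T(x)-a_p$ lies in $W_p$. Taking scalar product with $m_F$ and combining with the previous inequality gives
$$
m_F\cdot (f(x)+g(x)h_T(x))\ \le\ -\max_{w\in W_p} m_F\cdot w\ +\ \max_{w\in W_p} m_F\cdot w\ =\ 0
$$
for every $x\in F$. Since $S_p$ and $F$ were arbitrary, Lemma~\ref{lem:invariant} applies and yields that $h_T$ solves Problem~\ref{prob:invariance}.

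There is no real obstacle: the argument is essentially a bookkeeping step that shows how the worst-case disturbance margin introduced on the right-hand side of condition~(\ref{eq:cond1}) is exactly the slack needed to dominate the hybridization error when passing from the affine abstraction back to the nonlinear dynamics. The only point that deserves care is verifying that $x\mapsto A_p x + B_p h_T(x)+a_p$ is genuinely affine on $S_p$ (so that Proposition~\ref{pro:blocked} is applicable), which follows from the piecewise-affine construction of $h_T$ on the triangulation $\mathcal S_T$.
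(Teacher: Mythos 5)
Your proof is correct and follows essentially the same route as the paper: absorb the hybridization error via Definition~\ref{def:conservative}, extend the vertex inequalities (\ref{eq:cond1}) to the whole facet by affinity of $x\mapsto A_p x+B_p h_T(x)+a_p$, and conclude with Lemma~\ref{lem:invariant}. The only cosmetic difference is that you package the vertex-to-facet step as an application of Proposition~\ref{pro:blocked}, whereas the paper argues the same convexity fact directly.
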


\begin{proof} Let  $S_p\in {\mathcal S}_T$, $F\in \textit{external-facets}(S_p,X_T)$, and $x\in F$, then
\begin{eqnarray*}
m_F\cdot ( f(x) + g(x) h_T(x) )& =& m_F\cdot (A_p x + B_p h_T(x) +a_p)\\
&& +m_F\cdot ( f(x) + g(x) h_T(x) -A_p x - B_p h_T(x) -a_p).
\end{eqnarray*}
Since $\Sigma'$ is a hybridization of $\Sigma$, and since $h(x)\in U$, it follows from
Definition~\ref{def:conservative} that $( f(x) + g(x) h_T(x) -A_p x - B_p h_T(x) -a_p) \in W_p$. Therefore,
$$
m_F\cdot ( f(x) + g(x) h_T(x) ) \le m_F\cdot (A_ px + B_p h_T(x) +a_p) +\max_{w\in W_p} m_F \cdot w.
$$
Since $A_p x + B_p h_T(x) +a_p$ is an affine map on $F$, we have by convexity and by (\ref{eq:cond1}) that 
$$
m_F\cdot (A_p x + B_p h_T(x) +a_p) +\max_{w\in W_p} m_F \cdot w \le 0.
$$
Then, from Lemma~\ref{lem:invariant}, we have that the controller $h_T$ solves  Problem~\ref{prob:invariance}. 
\end{proof}

Let us remark that the computation of the controller $h_T$ involves only finding values of $h_T$, satisfying a set of linear inequalities, at the vertices of the triangulation.
This can be done efficiently using linear programming. For simplicity of the presentation, we assumed that we render the whole target set $X_T$ invariant; actually, as far as Problem~\ref{prob:motion} is concerned, it is sufficient to render a subset $X_T'\subseteq X_T$ invariant. This can be done a similar way. 

\subsection{Reachability controller} We now describe how to synthesize a hybrid controller solving Problem~\ref{prob:motion}. The proposed controller essentially drives the trajectories of the system through a sequence of simplices ending in $X_T$. 
This is done by coordinating robust affine controllers defined in section~\ref{sec:con}. 

Let $\mathcal S'$ be a subset of the triangulation $\mathcal S$ and let $S_p\in \mathcal S$,
we denote by $\textit{common-facets}(S_p,\mathcal S')$ the subset of facets of the simplex $S_p$
that are also facets of a simplex in $\mathcal S'$. We denote by $\textit{adjacent}(\mathcal S')$
the subset of simplices that are adjacent to a simplex in $\mathcal S'$: $S_p \in \textit{adjacent}(\mathcal S')$
if and only if $S_p \in \mathcal S \setminus \mathcal S'$ and $\textit{common-facets}(S_p,\mathcal S')\ne \emptyset$.
The synthesis of the hybrid controller $\mathcal C$ can be done using the following algorithm:

\begin{algorithm}\label{algo:reach}
Let $Q:=\{q_T\}$, $D_{q_T}:=X_T$, $h_{q_T}$ solve Problem~\ref{prob:invariance}, $E:=\emptyset$.\\
Let $k:=0$, $\mathcal S_{-1}:=\emptyset$, $\mathcal S_0:=\mathcal S_T$.\\
While $\mathcal S_{k}\ne\mathcal S_{k-1}$:
\begin{itemize}
\item $\mathcal S_{k+1}:=\mathcal S_k$.
\item For all $S_p \in \textit{adjacent}(\mathcal S_k)$,
if there exists an affine feedback controller $h_p$ which solves Problem~\ref{prob:exit} with
$
(S,\mathcal F,A,B,a,W)=(S_p, \textit{common-facets}(S_p,\mathcal S_k),A_p,B_p,a_p,W_p),
$
then let $\mathcal S_{k+1}:=\mathcal S_{k+1}\cup \{S_p\}$, 
$Q:=Q\cup \{p\}$, $D_p:=S_p$, $h_p$ solution of Problem~\ref{prob:exit} and:
\begin{itemize}
\item For $S_{p'}\in \mathcal S_k\setminus \mathcal S_T$, if $\textit{common-facets}(S_p,\{S_{p'}\})\neq \emptyset$ 
then $E:=E\cup \{(p,p')\}$, and $G_{(p,p')}:=\textit{common-facets}(S_p,\{S_{p'}\})$,

\item If $\textit{common-facets}(S_p,\mathcal S_T)\neq \emptyset$ then $E:=E\cup \{(p,q_T)\}$,
$G_{(p,q_T)}:=\textit{common-facets}(S_p,\mathcal S_T)$.
\end{itemize}

\item $k:=k+1$.
\end{itemize}
Return $\mathcal C=(Q,E,\mathcal D,\mathcal G, \mathcal H)$ where $\mathcal D=\{D_q|\; q\in Q\}$,
$\mathcal G=\{G_e|\; e\in E\}$, $\mathcal H=\{h_q|\; q\in Q\}$.
\end{algorithm}

The algorithm essentially applies dynamic programming:
$\mathcal S_k$ consists of the simplices from which we can drive the system to $X_T$, using  robust affine controllers,
by following a sequence of at most $k$ simplices; at each iteration we add to $\mathcal S_{k+1}$ the simplices
in which we can drive all the trajectories to facets that are shared with some simplices in $\mathcal S_k$.
Let us remark that the Algorithm~\ref{algo:reach} necessarily stops after $K < |\mathcal S|$ iterations and each iteration involves solving a number of linear programs that is at most linear in the number of simplices.
Then, the worst-case time complexity of the algorithm is quadratic in the number
of simplices. However, it is to be noted that the number of simplices in the triangulation is exponential in the dimension of the state-space. The algorithm is therefore exponential in the dimension of the system. This is expected
for an approach based on abstractions defined on a partition of the state-space.

\begin{theorem} Let $\mathcal C$ be computed by Algorithm~\ref{algo:reach};
if $X_I \subseteq \bigcup_{q\in Q} D_q$ then  $\mathcal C$ solves Problem~\ref{prob:motion}.
\end{theorem}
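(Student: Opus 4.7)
The plan is to organize the modes by the iteration at which they are added by Algorithm~\ref{algo:reach}, and argue that every execution descends through these levels in finite time, eventually reaching and remaining in $q_T$. Define the level $L(S_p):=\min\{k\ge 0 : S_p\in\mathcal S_k\}$, so $L=0$ on $\mathcal S_T$ (the merged mode $q_T$) and $L\ge 1$ for each simplex receiving its own mode in a later iteration. The hypothesis $X_I\subseteq\bigcup_{q\in Q}D_q$ guarantees that for any $\mathbf x_0(t_0)\in X_I$ some mode $q_0$ with $\mathbf x_0(t_0)\in D_{q_0}$ exists, so an execution can indeed start.

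For the target mode, Proposition~\ref{pro:invert} applied to $h_{q_T}$ implies that every solution of $\dot{\mathbf x}=f(\mathbf x)+g(\mathbf x)h_{q_T}(\mathbf x)$ starting in $X_T$ remains in $X_T$, so an execution reaching $q_T$ can be continued with final interval $[t_N,+\infty)$ and $\mathbf x_N(t)\in X_T$ throughout. For a non-target mode $p$ with $L(S_p)=k+1$, the crucial step is to reinterpret the closed-loop nonlinear trajectory as a trajectory of a disturbed affine system on $S_p$: setting $\mathbf w(t):=f(\mathbf x(t))+g(\mathbf x(t))h_p(\mathbf x(t))-A_p\mathbf x(t)-B_p h_p(\mathbf x(t))-a_p$ gives a continuous signal with $\mathbf w(t)\in W_p$ by Definition~\ref{def:conservative}, and $\mathbf x$ satisfies $\dot{\mathbf x}=A_p\mathbf x+B_p h_p(\mathbf x)+a_p+\mathbf w(t)$ as long as it stays in $S_p$. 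Since $h_p$ was constructed to solve Problem~\ref{prob:exit} with $\mathcal F=\textit{common-facets}(S_p,\mathcal S_k)$, Proposition~\ref{pro:exit} forces $\mathbf x$ to leave $S_p$ in finite time, and Proposition~\ref{pro:blocked2} forces the exit to occur through a facet in $\mathcal F$, which by construction is the guard of an edge from $p$ to a mode $q'$ with $L(S_{q'})\le k$.

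Assembling these ingredients: every finite execution either sits in $q_T$ and extends indefinitely, or sits in a non-target mode and is forced to take a valid discrete transition in finite time, so $\Sigma || \mathcal C$ is non-blocking. Strict decrease of $L$ at each switch bounds the total number of switches on any execution by $L(S_{q_0})$, which is finite; hence every infinite execution has $N<\infty$, terminates in mode $q_T$ with $\tau_N=[t_N,+\infty)$, and trivially cannot accumulate switching times, which rules out Zeno behavior. Taking $i^*$ as the first index with $q_{i^*}=q_T$ then yields $\mathbf x_i(t)\in D_{q_T}=X_T$ for all $i\ge i^*$ and $t\in\tau_i$, completing the verification of Problem~\ref{prob:motion}.

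The step that deserves the most care, and that I expect to be the main obstacle, is the transfer of the exit and blocking guarantees from the robust affine controller (designed for the disturbed abstraction $\Sigma'$) to the actual nonlinear closed loop. This is the only place where the hybridization property is used, via the identification of the approximation error with an admissible disturbance signal $\mathbf w(t)\in W_p$; once this identification is in place, the results of the preceding section apply to $\Sigma$ essentially verbatim, and the rest of the argument reduces to combinatorial bookkeeping on the levels $L(S_p)$.
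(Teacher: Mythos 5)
Your proposal is correct and follows essentially the same route as the paper's proof: identify the approximation error as an admissible continuous disturbance so that the results on robust affine controllers apply to the nonlinear closed loop, use invariance in the target mode, and bound the number of switches by the iteration index at which each simplex was added (your level $L$ is exactly the paper's induction on $k$). The only cosmetic slip is citing Proposition~\ref{pro:exit} (the vertex-based existence criterion) where the relevant fact is simply that $h_p$ solves Problem~\ref{prob:exit}, which already guarantees exit in finite time and blocking of the facets outside $\mathcal F$.
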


\begin{proof} 
Let us consider a finite execution $(\tau_i,q_i,\mathbf x_i)_{i=0}^N$ of $\Sigma || \mathcal C$.
If $q_N=q_T$, then since $h_{q_T}$ solves Problem~\ref{prob:invariance}, $\mathbf x_N$ can be extended on $[t_N,+\infty)$.
If $q_N\ne q_T$, then $q_N \in P$, let $k\le K$ be the smallest index such that $S_{q_N} \in \mathcal S_k$.
Since $\Sigma'$ is a hybridization of $\Sigma$, the continuous dynamics 
in mode $q_N$ is: 
\begin{eqnarray*}
\dot{\bf x}(t) &=& f({\bf x}(t)) + g({\bf x}(t)) h_{q_N}({\bf x}(t)) \\
                &=& A_{q_N}{\bf x}(t) + B_{q_N} h_{q_N}({\bf x}(t)) + a_{q_N} +{\bf w}(t),\; {\bf w}(t) \in W_{q_N}
\end{eqnarray*}
for an obvious particular value of the disturbance ${\bf w}(t)$.
By construction of $h_{q_N}$, all trajectories of the linear system above exits $S_{q_N}$ in finite time
and by Proposition~\ref{pro:blocked2}, $\mathbf x_N$ can be extended on $[t_N,t'_{N+1}]$
where ${\mathbf x}_N(t'_{N+1})$ belongs to a facet $F_{q_N}$ of $S_{q_N}$ that is not blocked for the linear system above.
By construction of the guards, there exists an edge $e=(q_N,q_{N+1})$, such that $F_{q_N}\subseteq G_e$.
Hence, the finite execution can be extended into an execution of the form $(\tau_i,q_i,\mathbf x_i)_{i=0}^{N+1}$. Moreover, either $q_{N+1}=q_T$ or we have $S_{q_{N+1}}\in \mathcal S_{k-1}$.
From the previous discussion, and using a simple induction, we can show that every finite
executions of $\Sigma || \mathcal C$ can be extended 
into infinite ones, by taking a finite number of discrete transitions (at most $K$) before reaching the final mode $q_T$
in which continuous evolution is enabled until time goes to infinity. Hence, $\Sigma || \mathcal C$ is non-blocking.
Since all executions have only a finite number of discrete transitions, $\Sigma || \mathcal C$ is necessarily non-Zeno.
Finally, all executions of $\Sigma || \mathcal C$ reach the final mode $q_T$ and stays there forever, since $D_{q_T}=X_T$,  Problem~\ref{prob:motion} is solved.
\end{proof}

Let us remark that our approach for solving Problem~\ref{prob:motion} is clearly conservative.
Our algorithm may fail to solve the problem even though a suitable controller exists. There are
several sources of conservatism. The first one is due to the use of a hybridization. Proposition~\ref{pro:error2}
suggests that this can be reduced by using a finer triangulation at the price of an increased computational effort.
The other sources of conservatism are inherent to the approach developed in~\cite{Habets2006}.
However, this conservatism allows us to synthesize controllers that are correct by design by a
fully automated method which is computationally effective
as shown in the following section.

\section{Example}

In this section, we apply our approach for controlling a pendulum mounted on a cart. 
This system has been considered in~\cite{Reissig2011} for illustrating controller synthesis 
using discrete abstractions.
The dynamics of the system is described by
$$
\left\{
\begin{array}{llll}
\dot{\bf x}_1(t)& = & {\bf x}_2(t) &\\
\dot{\bf x}_2(t)& = & -\sin({\bf x}_1(t)) - {\bf u}(t) \cos({\bf x}_1(t)) &,\; {\bf u}(t)\in [-\alpha,\alpha] 
\end{array}
\right.
$$
where ${\bf x}_1(t)$ is the angle between the pendulum and the downward vertical, ${\bf x}_2(t)$ is the angular velocity. The control input ${\bf u}(t)$ is the acceleration of the cart and is assumed to be bounded in absolute value by $\alpha$. The goal is to bring the pendulum from downward vertical position (${\bf x}_1(t)=0$) to upward vertical position (${\bf x}_1(t)=\pi$) and to keep it there. This can be formulated as Problem~\ref{prob:motion} where $X=[0,2\pi] \times [-\pi,\pi]$, $X_I=\{(0,0)\}$ and
$X_T$ is a neighborhood of $(\pi,0)$. Let us remark that since ${\bf x}_1(t)$ represents an angle then if the trajectory exits the domain $X$ with ${\bf x}_1(t)=2\pi$, it re-enters immediately with ${\bf x}_1(t)=0$ and the same velocity. This is important when defining the adjacency relations between simplices of the triangulation of $X$.


We use a regular triangulation defined on a partition of $X$ into squares of length $\pi/N$ where $N\in \N$ is a parameter. Each of these squares is then partionned into two triangles: this can be done using a line of slope $1$ or $-1$ giving two possible orientations for the simplices.
The target set $X_T$ is given by
$$
X_T=\{(x_1,x_2)\in X|\; 
\begin{array}{lll} {\frac{(N-1)\pi}{N}} \le x_1 \le \frac{(N+1)\pi}{N},& \frac{-\pi}{N} \le x_2 \le \frac{\pi}{N}, &
\frac{(N-1)\pi}{N} \le x_1+x_2 \le \frac{(N+1)\pi}{N} \end{array} \}.
$$
The orientation of the simplices is left free at the beginning and is determined on the fly by Algorithm~\ref{algo:reach}.
The hybridization is also computed on the fly. Let us remark that only the second component of the vector field is nonlinear, therefore the disturbance set $W_p$ is of the form
$W_p=\{0\}\times [\underline \mu_{p,2},\overline \mu_{p,2}]$ where the bounds 
given in Section~\ref{sec:hyb} can be computed explicitly. When computing the continuous controllers, to choose among the possible values at the vertices of the simplex, we try to minimize the variation of the control map within the simplex. In particular, if a constant controller
solves the problem then such a controller will be used.

Algorithm~\ref{algo:reach} was implemented in Matlab, and ran on a standard desktop equipped with CPU Pentium 4 (3.20 GHz).
We report in Table~\ref{tab} performances of the algorithm for varying number of simplices and set of inputs.
We can check that the computational costs increases as the constraints on the inputs become tighter. 
Also, the complexity with respect to the number of simplices, estimated experimentally on successful cases is roughly 
in $O(|\mathcal S|^{3/2})$ which is polynomial but better than the theoretical worst-case complexity which is quadratic.
Compared to the approach presented in~\cite{Reissig2011}, where the same example is considered, the approach seems 
to have quite similar performances: reported computation times are of the same order and the estimated complexity
appears to be polynomial in the number of discrete states as well.

\begin{table}
\begin{center}
\begin{tabular}{|c|c|c|c|c|}
\hline
\backslashbox{$U$}{$N$, $|\mathcal S|$} & 10, 800 & 20, 3200 & 30, 7200 & 40, 12800 \\
\hline
$[-2,2]$ & 6 - No & 176 - Yes  &  591 - Yes & 1263 - Yes\\
\hline
$[-3,3]$ & 6 - No & 102 - Yes  &  330 - Yes & 631 - Yes\\
\hline
$[-4,4]$ & 14 - Yes & 90 - Yes &  251 - Yes & 531 - Yes\\
\hline
\end{tabular}
\caption{\label{tab} Run times in seconds and success of the synthesis for varying number of simplices and
set of inputs. }
\end{center}
\end{table}

In Figure~\ref{fig2}, we show the controller obtained for $U=[-4,4]$ and $N=10$ (i.e. $|\mathcal S|=800$),
as well as the trajectory of the controlled system. The pendulum 
starts balancing in the anti-clockwise direction to accumulate some energy, then it moves in the clockwise direction until it reaches the target.
This nontrivial example shows that our approach can be used for automated controller synthesis for nonlinear systems.

\begin{figure}[!t]
\begin{center}
\vspace{-0.5cm}
\includegraphics[scale=0.78]{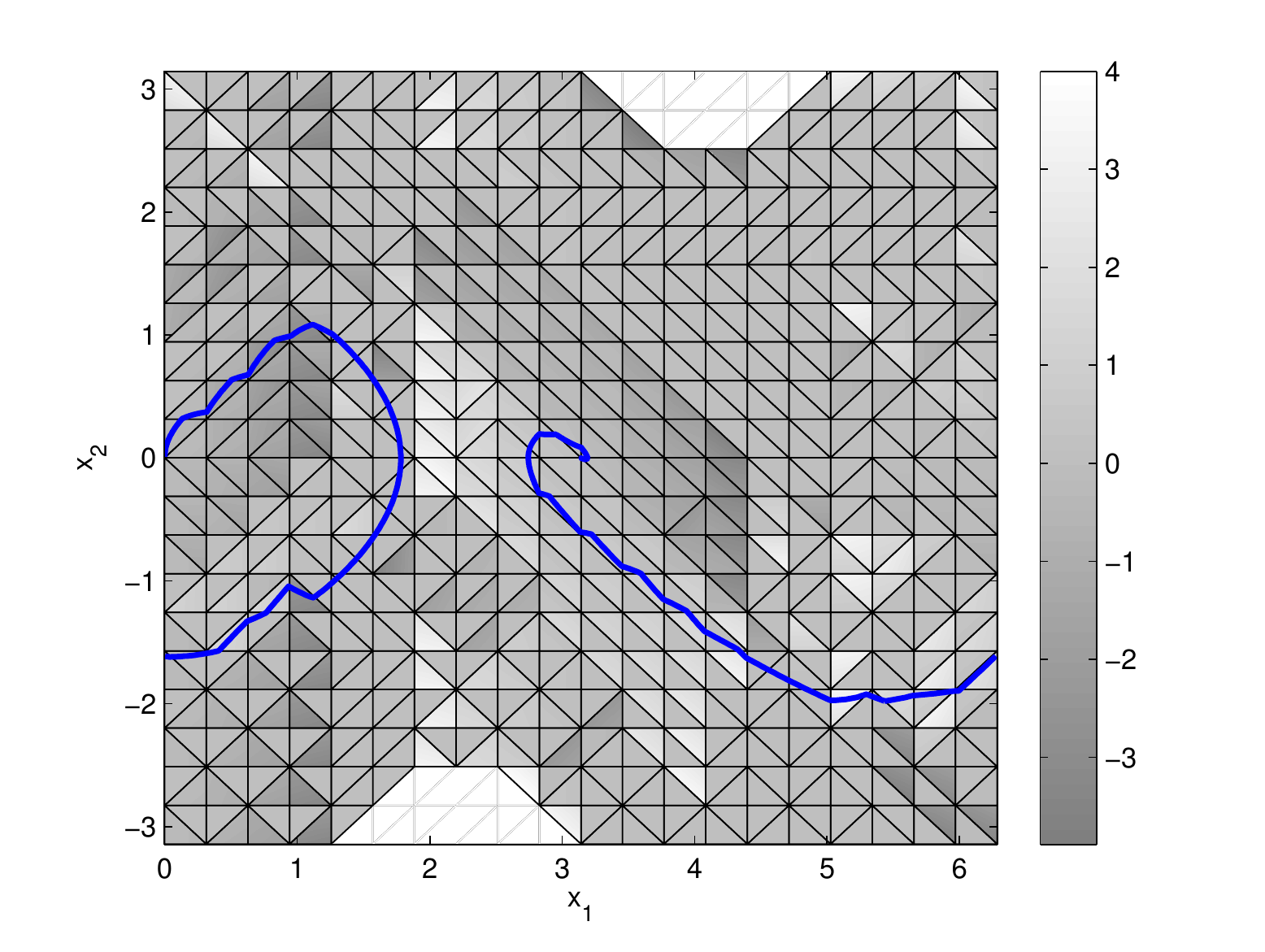}
\vspace{-0.7cm}
\caption{\label{fig2} Controller and trajectory of the pendulum on a cart for $\alpha=4$ and $N=10$. Levels of gray indicates the value of the control input constrained in $[-4,4]$.}
\end{center}
\end{figure}

\section{Conclusion}

In this paper, we presented an algorithmic approach to controller synthesis for constrained nonlinear
systems. Our technique is based on two main ingredients, namely a hybridization and robust affine controllers
on simplices. Though conservative, our method can be fully automated and we showed that it is effective on an example.
Our method should probably be reserved to small-dimensional systems as the number of simplices in the triangulation
explodes when the dimension grows.

There are several possible extensions for this work. First, instead of using piecewise affine hybridization defined 
on a triangulation,
one could use piecewise multi-affine hybridization~\cite{Asarin2007} defined on a partition of the state domain in
hypercubes as an extension of~\cite{Belta2006} would allow us to synthesize 
local controllers on hypercubes. Second, by extending the approach presented in~\cite{Fainekos2005,Kloetzer2006},
one could solve more complex control problems such as those specified in linear temporal logic (LTL). Finally,
the class of nonlinear system we consider could be extended, by considering systems of the form
$$
\dot{\bf x}(t) =f({\bf x}(t),{\bf u}(t)), \; {\bf x}(t)\in X,\; {\bf u}(t)\in U.
$$
Following~\cite{Dumas2005}, a piecewise affine hybridization could be computed by triangulating the domain $X\times U$.

\bibliographystyle{biblio/IEEEtranS} 
\bibliography{biblio/references}

\begin{thebibliography}{10}
\providecommand{\url}[1]{#1}
\csname url@rmstyle\endcsname
\providecommand{\newblock}{\relax}
\providecommand{\bibinfo}[2]{#2}
\providecommand\BIBentrySTDinterwordspacing{\spaceskip=0pt\relax}
\providecommand\BIBentryALTinterwordstretchfactor{4}
\providecommand\BIBentryALTinterwordspacing{\spaceskip=\fontdimen2\font plus
\BIBentryALTinterwordstretchfactor\fontdimen3\font minus
  \fontdimen4\font\relax}
\providecommand\BIBforeignlanguage[2]{{%
\expandafter\ifx\csname l@#1\endcsname\relax
\typeout{** WARNING: IEEEtran.bst: No hyphenation pattern has been}%
\typeout{** loaded for the language `#1'. Using the pattern for}%
\typeout{** the default language instead.}%
\else
\language=\csname l@#1\endcsname
\fi
#2}}

\bibitem{Asarin2007}
E.~Asarin, T.~Dang, and A.~Girard, ``Hybridization methods for the analysis of
  non-linear systems,'' \emph{Acta Informatica}, vol.~43, no.~7, pp. 451--476,
  2007.

\bibitem{Aubin91}
J.-P. Aubin, \emph{Viability Theory}.\hskip 1em plus 0.5em minus 0.4em\relax
  Birkha\"user, 1991.

\bibitem{Belta2006}
C.~Belta and L.~Habets, ``Controlling a class of nonlinear systems on
  rectangles,'' \emph{IEEE Trans. on Automatic Control}, vol.~51, no.~11, pp.
  1749--1759, 2006.

\bibitem{Belta2005}
C.~Belta, V.~Isler, and G.~Pappas, ``Discrete abstractions for robot motion
  planning and control in polygonal environments,'' \emph{IEEE Trans. on
  Robotics}, vol.~21, no.~5, pp. 864--874, 2005.

\bibitem{Dang2010}
T.~Dang, O.~Maler, and R.~Testylier, ``Accurate hybridization of nonlinear
  systems,'' in \emph{Hybrid Systems: Computation and Control}, 2010, pp.
  11--20.

\bibitem{Dumas2005}
\BIBentryALTinterwordspacing
J.-G. Dumas and A.~Rondepierre, ``A hybrid system approach to nonlinear optimal
  control problems,'' IMAG, Tech. Rep., 2005. [Online]. Available:
  \url{http://arxiv.org/abs/math/0502172}
\BIBentrySTDinterwordspacing

\bibitem{Fainekos2005}
G.~Fainekos, H.~Kress-Gazit, and G.~Pappas, ``Temporal logic motion planning
  for mobile robots,'' in \emph{Proc. of the IEEE Int. Conf. on Robotics and
  Automation}, 2005, pp. 2020--2025.

\bibitem{Girard2004}
A.~Girard, ``Analyse algorithmique des syst\`emes hybrides,'' Ph.D.
  dissertation, Institut National Polytechnique de Grenoble, 2004.

\bibitem{Girard2008}
A.~Girard and S.~Martin, ``Motion planning for nonlinear systems using
  hybridizations and robust controllers on simplices,'' in \emph{IEEE Conf. on
  Decision and Control}, 2008, pp. 239--244.

\bibitem{Habets2006}
L.~Habets, P.~Collins, and J.~van Schuppen, ``Reachability and control
  synthesis for piecewise-affine hybrid systems on simplices,'' \emph{IEEE
  Trans. on Automatic Control}, vol.~51, no.~6, pp. 938--948, 2006.

\bibitem{Habets2001}
L.~Habets and J.~van Schuppen, ``Control of piecewise-linear hybrid systems on
  simplices and rectangles,'' in \emph{Hybrid Systems: Computation and
  Control}, ser. LNCS, vol. 2034.\hskip 1em plus 0.5em minus 0.4em\relax
  Springer, 2001, pp. 261--274.

\bibitem{Habets2004}
------, ``A control problem for affine dynamical systems on a full-dimensional
  polytope,'' \emph{Automatica}, vol.~40, pp. 21--35, 2004.

\bibitem{Henzinger1998}
T.~Henzinger, P.-H. Ho, and H.~Wong-Toi, ``Algorithmic analysis of nonlinear
  hybrid systems,'' \emph{IEEE Trans. on Automatic Control}, vol.~43, no.~4,
  pp. 540--554, 1998.

\bibitem{Kloetzer2006}
M.~Kloetzer and C.~Belta, ``A fully automated framework for control of linear
  systems from {LTL} specifications,'' in \emph{Hybrid Systems: Computation and
  Control}, ser. LNCS, vol. 3927.\hskip 1em plus 0.5em minus 0.4em\relax
  Springer, 2006, pp. 333--347.

\bibitem{Lygeros2003}
J.~Lygeros, K.~H. Johansson, S.~N. Simic, J.~Zhang, and S.~S. Sastry,
  ``Dynamical properties of hybrid automata,'' \emph{IEEE Trans. on Automatic
  Control}, vol.~48, no.~1, pp. 2--17, 2003.

\bibitem{Ramdani2008}
N.Ramdani, N.Meslem, and Y.Candau, ``Reachability of uncertain nonlinear
  systems using a nonlinear hybridization,'' in \emph{Hybrid Systems:
  Computation and Control}, ser. LNCS, vol. 4981.\hskip 1em plus 0.5em minus
  0.4em\relax Springer, 2008, pp. 415--428.

\bibitem{Orourke1998}
J.~O'Rourke, \emph{Computational geometry in {C}}.\hskip 1em plus 0.5em minus
  0.4em\relax Cambridge Tracts in Theoretical Computer Science, 1998.

\bibitem{Reissig2011}
G.~Reissig, ``Computing abstractions of nonlinear systems,'' \emph{IEEE Trans.
  on Automatic Control}, 2011, to appear.

\bibitem{Roszak2006}
B.~Roszak and M.~Broucke, ``Necessary and sufficient conditions for
  reachability on a simplex,'' \emph{Automatica}, vol.~42, no.~11, pp.
  1913--1918, 2006.

\bibitem{tabuada2009}
P.~Tabuada, \emph{Verification and Control of Hybrid Systems - A Symbolic
  Approach}.\hskip 1em plus 0.5em minus 0.4em\relax Springer, 2009.

\bibitem{Waldron1998}
S.~Waldron, ``The error in linear interpolation at the vertices of a simplex,''
  \emph{SIAM J. Numer. Anal.}, vol.~35, no.~3, pp. 1191--1200, 1998.

\end{thebibliography}

\end{document}